\documentclass[a4paper,7pt,oneside,onecolumn,number,preprint,centertitle]{elsarticle}

\usepackage{amsmath,amssymb,bm}
\usepackage{graphicx}
\usepackage{textcomp}
\usepackage{xcolor}
\usepackage{amsthm}
\usepackage{tikz}

\usepackage{enumerate}
\usepackage{bbold}
\usepackage{epsfig}
\usepackage{multicol}
\usepackage{float}
\usepackage{hyperref}
\usepackage{multirow}
\usepackage{mathrsfs} %
\usepackage{dsfont}
\usepackage{algorithm}
\usepackage{algorithmicx}
\usepackage{algpseudocode}
\usepackage{flushend}

\usepackage{ifthen}
\DeclareMathAlphabet{\mbb}{U}{bbold}{m}{n} %
\newcommand \mb[1] {\ifthenelse{\equal{#1}{0}}{\mbb{0}}{\ifthenelse{\equal{#1}{1}}{\mbb{1}}{\mathbb{#1}}}} %

\renewcommand{\l}{\left}

\newtheorem{thm}{Theorem}

\newtheorem{cor}[thm]{Corollary}
\newtheorem{rem}{Remark}
\newtheorem{lem}[thm]{Lemma}
\newtheorem{ass}{Assumption}
\newtheorem{defn}{Definition}

\newtheorem{exmp}{Example}%

\renewcommand{\textcolor}[2]{#2}
\begin{document}
\begin{frontmatter}

\title{Scalable Design of Attack-Resilient Controllers for Positive Systems}

\author[inst1]{Alba Gurpegui}
    \ead{alba.gurpegui_ramon@control.lth.se}

\affiliation[inst1]{organization={Department of Automatic Control, Lund University},%
            addressline={Ole Römers väg 1}, 
            city={Lund},
            postcode={223 63}, 
            country={Sweden}}

\author[inst2,inst4]{Sribalaji C. Anand}
    \ead{srca@kth.se}

    \affiliation[inst2]{organization={Department of Electrical and Systems Engineering, University of Pennsylvania},%
            addressline={200 South 33rd Street}, 
            city={Philadelphia},
            postcode={PA 19104}, 
            country={United States}}

    \affiliation[inst4]{organization={Department of Decision and Control Systems, KTH},%
            addressline={Malvinas väg 10}, 
            city={Stockholm},
            postcode={10044}, 
            country={Sweden}}

\author[inst3]{Andr{\'e} M. H. Teixeira}
    \ead{andre.teixeira@it.uu.se}

    \affiliation[inst3]{organization={Department of Information Technology, Uppsala University},%
            addressline={Regementsvägen 10}, 
            city={Uppsala},
            postcode={751 05}, 
            country={Sweden}}

\begin{abstract}
This paper proposes a framework for secure and resilient controller design for positive systems against cyber-attacks. In particular, we consider a network-controlled system where an adversary injects false data into the actuator channels to increase the control cost (performance measure) while penalizing the attack effort and subject to state-dependent constraints. Using a minimax formulation, we analyze the worst-case performance loss caused by such adversaries, which is given by the solution of a difference equation, and an algebraic equation when the time horizon is infinite. We show that the optimal attack policy, among possible nonlinear policies, is linear. Despite the lack of explicit stealthiness constraints, we also show that when the measured output has an unstable zero which is not an unstable zero of the performance measure, the attacks can induce unbounded performance degradation. The proposed framework is also extended to systems with model uncertainty. Numerical examples illustrate the results and demonstrate how tools from positive systems and linear regulator theory can be used to mitigate cyber-attacks with low computational effort.

\footnotetext{
This work is supported by the Wallenberg AI, Autonomous Systems and Software Program (WASP), the Knut and Alice Wallenberg Foundation, the European Research Council under the  grant No 834142, by the Swedish Research Council under the grant 2021-06316, 2024-00185, and by the Swedish Foundation for Strategic Research.}
\end{abstract}

\begin{keyword}
Minimax \sep Fault Tolerant Systems \sep Optimal Control \sep Robust Control \sep Large-scale systems.
\end{keyword}

\end{frontmatter}

\section{Introduction}
\label{sec:introduction}
Network-controlled systems (NCSs) are an integral part of many critical infrastructures such as power grids \cite{wang2011wide}, water distribution systems \cite{amin2012cyber}, and production networks \cite{ahlen2019toward}. Due to the growing number of cyber-attacks on such NCS \cite{hemsley2018history}, its security has been a rich area of research for over two decades \cite{sandberg2022secure}. To build a resilient NCS, the first step is to quantify the worst case performance loss caused by stealthy attacks. For instance, the loss can be measured in terms of deviation caused in the state trajectory, state variance, control cost, etc (see \cite[section 3]{anand2025quantifying} for details). The second step is to design a controller that minimizes this worst-case performance loss. 

In general, the worst-case performance loss caused by an attack is computed by solving a semi-definite program (SDP) \cite{hashemi2022codesign}. However, such SDPs scale poorly for large-scale systems. 
Thus there has been a growing research interest to obtain scalable and resilient control design approaches \cite{milovsevic2017exploiting,nguyen2024scalable} which will be the focus of this paper. 

In particular, we consider a linear time-invariant (LTI) discrete time (DT) system controlled with a time-varying output feedback controller over the network. The controller obeys the actuator constraints which are enforced through the measured outputs. An adversary injects false data into the actuator channels to increase the control cost. Under this setup, we present the following contributions.
\begin{enumerate}
    \item \textcolor{blue}{We interpret the interaction between the adversary and the controller as a minimax optimal control problem and characterize the optimal strategies as a difference equation and an algebraic equation.}
    \item \textcolor{blue}{We derive conditions under which the presence of unstable plant zeros can lead to unbounded performance degradation, revealing structural vulnerabilities.}
    \item We prove that the optimal attack strategy is a (static) switching linear feedback policy in (in)finite horizon.
    %and a static linear feedback in the infinite horizon. 
    We derive the necessary conditions for an attack-resilient controller to exist.
    \item \textcolor{blue}{We extend the proposed framework to (a) cases when the adversary can have unbounded attack effort, and (b) parametric model uncertainties in the state matrix.} 
\end{enumerate}

While this paper builds on~\cite{albaEmmaAnders}, key differences include: an output feedback setting with a strategic adversary (vs. state feedback with bounded disturbances), and extensions to unbounded attack signals and model uncertainties; neither of which are addressed in \cite{albaEmmaAnders}. Finally, the scalable control of positive systems has been extensively studied in \cite{rantzer2015scalable} and references therein. Although scalable controller design for positive systems against disturbances, and in distributed settings is well understood, the resilient control design aspect against strategic adversaries remains poorly studied. This paper serves as a first step towards understanding the interplay between scalability and resilience in positive systems.

\textcolor{blue}{\textit{Notation:} Let $\mathbb{R}_{+}$ denote the set of nonnegative real numbers. The inequality $X > Y$ $(X \geq Y)$ mean that all the elements of the matrix $(X-Y)$ are positive (nonnegative).  A matrix $X$ is called positive if all the elements of $X$ are nonnegative but at least one element is nonzero. The notation $\left | X \right |$ means elementwise absolute value. For a vector $x$, $\mathrm{diag}(x)$ denotes the diagonal matrix with entries of $x$ on the diagonal. The signum of a scalar is defined as the set-valued map }
\vspace{-4pt}
$$\scalebox{1}{$\textcolor{blue}{\operatorname{sign}(x) =}$}
    \scalebox{1}{$\ensuremath{\textcolor{blue}{
    \begin{cases}
        \{-1\} & \text{if } x < 0\\
        [-1,+1] & \text{if } x = 0\\
        \{+1\} & \text{if } x > 0
    \end{cases}}}$}.$$
\section{Problem Formulation}\label{sec:prob:form}
In this section, we aim to describe the attack scenario, the adversarial policy, and finally state the problem studied in this paper. To this end, consider a DT LTI \textcolor{blue}{positive system:}
\begin{equation}\label{disc_sys_gral}
\begin{aligned}
    x[t+1]&=Ax[t]+B\tilde{u}[t],\; x[0] = x_0\\
    y[t]&=Cx[t] 
\end{aligned}    
\end{equation}
where $x[t] \in \mathbb{R}^n$ is the state of plant, $\tilde{u}[t] \in \mathbb{R}^m$ is the control input received by the plant, $y[t] \in \mathbb{R}^p$ is the plant output, and the matrices are of appropriate dimension.

\textcolor{blue}{An important feature of this framework is the positivity of the dynamics. Classical books on the topic are~\cite{BermanBook,Luenberger}. The positivity of the systems considered in this paper follows from the following result.}
\begin{lem}
   \textcolor{blue}{ Consider a DT LTI system $x[t+1]=Ax[t]$, $t\in [0,T]$ where $A\geq 0$. If $x[0]$ is entry-wise nonnegative, then $x[t]$ remains entry-wise nonnegative for all $t\in [0, T]$.}
\end{lem}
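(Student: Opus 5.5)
We argue by induction on $t$, using only that the product of an entrywise nonnegative matrix and an entrywise nonnegative vector is entrywise nonnegative. For the base case, $x[0]\geq 0$ holds by hypothesis. For the inductive step, suppose $x[t]\geq 0$ for some $t\in[0,T-1]$. The $i$-th entry of the next state is
\begin{equation*}
x_i[t+1]=\sum_{j=1}^{n} A_{ij}\,x_j[t],
\end{equation*}
where every factor $A_{ij}\geq 0$ because $A\geq 0$, and every factor $x_j[t]\geq 0$ by the induction hypothesis. Each term is therefore a product of two nonnegative reals, and a finite sum of such terms is nonnegative. Hence $x[t+1]\geq 0$, and induction gives $x[t]\geq 0$ for all $t\in[0,T]$.

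Equivalently, one can write $x[t]=A^{t}x[0]$ and note that $A^{t}\geq 0$ for every $t$. This follows by the same induction applied to matrix products: if $P,Q\geq 0$, then $(PQ)_{ij}=\sum_k P_{ik}Q_{kj}\geq 0$. Then $A^{t}x[0]$ is a nonnegative matrix times a nonnegative vector.

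There is no genuine obstacle. The one point to state explicitly is that the notation $X\geq Y$ means an elementwise inequality, so the claim reduces entirely to the closure of $\mathbb{R}_{+}$ under addition and multiplication. Nothing beyond this notation convention is needed from earlier in the paper.
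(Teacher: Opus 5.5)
Your induction is correct and complete, since $x[t]=A^{t}x[0]$ with $A^{t}\geq 0$ follows from the closure of $\mathbb{R}_{+}$ under sums and products. The paper gives no argument of its own and only cites Kaczorek's Theorem 2.6, the characterization of positive discrete-time systems. The sufficiency direction of that theorem is proved in exactly this elementary way, so your self-contained proof matches the intended one.
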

\begin{proof}
   \textcolor{blue}{ See~\cite[Theorem. 2.6]{Kaczorek}.}
\end{proof} 

Motivated by this property and similar to \cite{albaEmmaAnders}, we assume that the control inputs are bounded by the measured output according to $|u[t]|\leq E_y y[t]$ for all $t$, where $E_y$ is a nonnegative matrix of appropriate dimension. Since $y[t]=Cx[t]$, this constraint can be equivalently rewritten as $|u[t]|\leq Ex[t]$ where $E=E_yC$. We present further discussions on $E_y$ when discussing the adversary model.

We next consider an adversary that injects false data into the actuator channels for a given time horizon $T$, i.e.
\begin{align}
    \tilde u[t]=u[t]+B_a a[t], \hspace{2mm} \forall t \in \left\{1,\dots, T \right\}
\end{align}
where $a[t] \in \mathbb R^l$ is the false data injected by the adversary, and $B_a\in \mathbb R^{m \times l}$. Then the dynamics in \eqref{disc_sys_gral} under attack becomes
\begin{equation}\label{sys}
    \begin{aligned}
    x[t+1] &= Ax[t] + B{u}[t] + Fa[t], x[0] = x_0\\
    y[t] &= Cx[t]
\end{aligned}
\end{equation}
where $F \triangleq BB_a$. Next we provide a detailed description of the attack policy and constraints. 
\subsection{Attack policy}
Similar to \cite{wu2018optimal}, the adversary injects false data as a 
function of the system state
\begin{align*}
    a[t]=\mu_a(C_ax[t]), \;\mu_a: \mathbb{R}^n \to \mathbb{R}^l
\end{align*}
where $C_a\in \mathbb R^{p_a \times n}$ maps the system states to the adversary's observed states $y_a[t] = C_a x[t]$, and $\mu_a$ 
is a function possibly dependent on the system matrices.

In the literature, linear attack policies of the form $a[\textcolor{blue}{t}] = \Delta C_a x[\textcolor{blue}{t}]$ are commonly considered \cite{guo2016optimal,guo2018worst}, where $\Delta$ maps the observed states to the attack signal. Since $\Delta$ is not known beforehand, we characterize the class of all such attacks by the constraint
\begin{equation}\label{eq:attack_constraint}
    \vert a[t] \vert \leq G_a C_a x[t],
\end{equation}
where $G_a\in \mathbb R^{l \times p_a}$ is a shaping matrix chosen so that $G=G_a C_a$ is nonnegative.
For chosen $G_a$ the constraint \eqref{eq:attack_constraint} captures all linear attack policies when $\|\Delta\|_{\infty} \leq 1$.
Note that the constraint \eqref{eq:attack_constraint} also encompasses nonlinear and time-varying policies, making it strictly more general than those in the literature. The matrix $E_y$ plays a similar role to $G_a$ but for the controller policies. Thus, the set of admissible attack policies is given by
\begin{align}\label{cons_set_unbound}
\mathcal A_{Gx}&:=\left\{  a\in \mathbb R^l : |a|\leq G x, \hspace{0.7mm}\text{with}\; G \in \mathbb R^{l \times n}_+\right\}.
\end{align}
\begin{rem}\label{rem:G:Ca}
If 
% $G$ is chosen such that 
$\mathrm{ker}(G)=\mathrm{ker}(C_a)$, the bound $|a|\leq Gx$ depends only on the adversary's observed states. In particular, if $C_ax[t]=0$, then $Gx[t]=0$, so admissible attacks vanish along the adversary's unobservable direction. Moreover, $G_a$ allows the constraint to focus on specific state components, so that $G$ reflects both the attacker's information and the states targeted by the adversary. $\hfill \triangleleft$
\end{rem}
\subsection{Problem definition}
Combining the arguments above, we consider an adversary that injects false data by solving the following optimization problem under Assumption~\ref{ass_prob_form}
\begin{equation}\label{prob_form}
    \begin{aligned}
    \textcolor{blue}{\min_{|u|\leq E x} \max_{a\in \mathcal A_{Gx}}}  & \quad \sum_{t = 0}^{T} \left( s^{\top}x[t]+ r^{\top}u[t]-\alpha^{\top}a[t] \right ) \\
    \text{s.t.} & \quad u[t]=\mu(x[t]),\; a[t]=\mu_a(x[t])\; \\
    & \quad \text{System dynamics \eqref{sys}}
    \end{aligned}
\end{equation}
where $\mathcal A_{Gx}$ represents the constraint set defined in \eqref{cons_set_unbound}, $E=E_yC$, $s \in \mathbb R^n_+$, $r \in \mathbb R^m$, and $\alpha \in \mathbb R^l$. Here, $s$ is a vector with positive, nonzero elements at the positions corresponding to the target state. Thus, the term $s^{\top}x[t]$ represents the benefit (state deviation) the adversary gains from manipulating the state, and $\alpha$ penalizes the magnitude of the injected attack. We next establish the following.
\begin{ass}\label{ass_prob_form}
\textcolor{blue}{$A\geq|B|E +|F|G, \hspace{2mm} s \geq E^{\top}|r|-G^{\top}|\alpha|.$$\hfill \triangleleft$}
\end{ass}
\textcolor{blue}{The first condition in Assumption \ref{ass_prob_form} ensures invariance of the positive orthant under the system dynamics. In fact this assumption implies that both the open-loop and closed-loop system dynamics correspond to those of a positive system. The second condition guarantees that the cost functional in~\eqref{prob_form} is nonnegative and bounded from below.}
\begin{rem}\label{rem:E:C}
 Recall that $E=E_yC$. If $E$ is chosen such that $\ker(E)=\ker(C)$, then the input constraint $|u|\leq Ex$ depends only on the measured output. In particular, any trajectory satisfying $Cx[t]=0$ also satisfies $Ex[t]=0$, meaning that the admissible control collapses along unobservable directions. This is relevant when stealthy attack policies are exploited with respect to the measured output (e.g. zero-dynamics attacks). In the sequel, we call an attack stealthy when the measured output stays near zero despite the presence of a non-zero attack signal. $\hfill \triangleleft$
\end{rem}
\section{Main Results}\label{sec:bounded}
In this section, we provide a solution to the design problem \eqref{prob_form} in both finite and infinite horizon. The results of this section is organized as follows. Firstly in Theorem~\ref{thm_sup_boundG_fin} we discuss the results for finite horizon, and in Corollary~\ref{cor_sup_boundG_inf} we present the results for infinite horizon. In Theorem \ref{thm:zeros} we show that the optimal cost diverges in the presence of an invariant zero. Finally, we depict the results via numerical examples. 
%\vspace{-10pt}
\subsection{Solution to \eqref{prob_form} in the finite and infinite horizon}
In this section, we present a solution to the optimal control problem \eqref{prob_form}. Before presenting the result, we briefly recall the definition of the matrices $E = E_yC$ and $F=BB_a$. 
\begin{thm}[Finite horizon]\label{thm_sup_boundG_fin}
Consider the system \eqref{sys}, and suppose that \textcolor{blue}{Assumption~\ref{ass_prob_form} holds.} Then for any finite value $T \in \mathbb{Z}^{+}$, the following hold
\begin{enumerate}
    \item The optimal control problem \eqref{prob_form} has the optimal value $p_0^{\top}x_0$ with $p_T=0$ and
    \begin{equation}\label{seq_sup_fin_bound}
p_{t}=s+A^{\top}p_{t+1} + \begin{bmatrix}
    -E^{\top} & G^{\top}
\end{bmatrix}\begin{bmatrix}
|r+B^{\top}p_{t+1}|\\
|F^{\top}p_{t+1}-\alpha|
\end{bmatrix}
\end{equation}
\item The optimal attack policy, among all potentially nonlinear attack policies, is given by a linear time-varying feedback gain $a[t]=L^*[t]x[t]$ where
    \begin{align}\label{L_star_fin}
        L^*[t]\in \text{diag}(\text{sign}(-\alpha+F^{\top}p_t))G. \;\; \square
    \end{align}
\end{enumerate}
\end{thm}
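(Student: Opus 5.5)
I would prove the claim by backward induction via dynamic programming, showing that the value function at each stage is linear in the state, $V_t(x)=p_t^{\top}x$ for $x\in\mathbb{R}^n_+$, with $p_t$ generated by \eqref{seq_sup_fin_bound}. The induction is anchored at the terminal stage with $V_T\equiv 0$, i.e. $p_T=0$. A minor point to settle first is the indexing: the sum runs to $T$, but with $p_T=0$ the convention is that the stage-$T$ contribution is absorbed, so the recursion generates $p_{T-1},\dots,p_0$.

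\textbf{Inductive step.} Suppose $V_{t+1}(x)=p_{t+1}^{\top}x$ on the positive orthant. The Bellman equation reads
\begin{align*}
V_t(x)=\min_{|u|\leq Ex}\max_{|a|\leq Gx}\Big( s^{\top}x+r^{\top}u-\alpha^{\top}a+p_{t+1}^{\top}(Ax+Bu+Fa)\Big).
\end{align*}
Two properties make this stage problem tractable.
\begin{itemize}
\item Positivity. The first inequality in Assumption~\ref{ass_prob_form} gives $Ax+Bu+Fa\geq (A-|B|E-|F|G)x\geq 0$ for every admissible pair $(u,a)$, using the Lemma's argument. Hence the next state stays in $\mathbb{R}^n_+$, where the induction hypothesis applies.
\item Separability. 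The objective is affine and separable in $u$ and $a$, so the minimum and maximum decouple and the order of min and max is immaterial.
\end{itemize}
Each decoupled piece is linear maximization or minimization over a box. The box constraint $|a|\leq Gx$ is componentwise with half-widths $(Gx)_i\geq 0$, so
\begin{align*}
\max_{|a|\leq Gx}(F^{\top}p_{t+1}-\alpha)^{\top}a=|F^{\top}p_{t+1}-\alpha|^{\top}Gx,
\end{align*}
attained at $a_i=\sigma_i (Gx)_i$ with $\sigma_i\in\operatorname{sign}((F^{\top}p_{t+1}-\alpha)_i)$. Similarly,
\begin{align*}
\min_{|u|\leq Ex}(r+B^{\top}p_{t+1})^{\top}u=-|r+B^{\top}p_{t+1}|^{\top}Ex.
\end{align*}
Adding the pieces yields $V_t(x)=p_t^{\top}x$ with $p_t$ given exactly by \eqref{seq_sup_fin_bound}, which closes the induction. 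Evaluating at $t=0$ gives the optimal value $p_0^{\top}x_0$.

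\textbf{Generality of the policy classes.} To justify that optimizing over arbitrary, possibly nonlinear, state-feedback maps $\mu,\mu_a$ loses nothing, I would note that the pointwise maximizer above is optimal for each $x$ separately. Any other admissible $\mu_a$ therefore yields a no larger stage value at every $x$. The standard dynamic programming verification argument then shows that no policy does better, and the optimal attack is the linear time-varying feedback $a[t]=\mathrm{diag}(\sigma)Gx[t]$, which is exactly the form \eqref{L_star_fin}. When an entry of $F^{\top}p-\alpha$ vanishes, any value in $[-1,1]$ is optimal, which is why $\operatorname{sign}$ is set-valued. The subscript in \eqref{L_star_fin} ($p_t$ versus $p_{t+1}$) should be reconciled with the indexing convention.

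\textbf{Main obstacle.} I expect the delicate step to be rigorously confining the analysis to the positive orthant. The box maxima and minima above rely on $Gx\geq 0$ and $Ex\geq 0$, and the linearity of $V_{t+1}$ is only established on $\mathbb{R}^n_+$. I would therefore state the induction hypothesis explicitly on $\mathbb{R}^n_+$, assume $x_0\geq 0$, and invoke Assumption~\ref{ass_prob_form} at every stage to keep trajectories there. The second condition, $s\geq E^{\top}|r|-G^{\top}|\alpha|$, is used only to remark that the stage cost is bounded below, so the value is finite.
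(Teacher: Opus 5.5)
Your proposal is correct and follows the same dynamic-programming route as the paper. Both arguments use a linear cost-to-go $V_t(x)=p_t^{\top}x$, a stage-wise separable min--max over the boxes $|u|\leq Ex$ and $|a|\leq Gx$, and the pointwise boundary maximizer $a_i\in\operatorname{sign}(-\alpha_i+[F^{\top}p]_i)\,G_ix$. The only difference is that you prove linearity of the value function by explicit backward induction, using positive-orthant invariance from Assumption~\ref{ass_prob_form}, where the paper simply cites \cite{albaEmmaAnders}; your remarks about $p_t$ versus $p_{t+1}$ and the sum running to $T$ with $p_T=0$ point to genuine indexing inconsistencies in the statement, not gaps in your argument.
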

\begin{rem}
    The optimal control policy is also a switching policy~\cite{LR_paperII} and is given by $u[t]= -K^*[t]x[t]$ with
    \begin{align}\label{K_t}
        K^*[t] \hspace{0.5mm}\textcolor{blue}{\in} \hspace{0.5mm} \text{diag}(\text{sign}(r+p^{\top}_t B))E. \;\; \triangleleft
    \end{align}
\end{rem}
The proof of Theorem~\ref{thm_sup_boundG_fin} and all the other results in the sequel can be found in the appendix. The contributions of Theorem~\ref{thm_sup_boundG_fin} are twofold. \textcolor{blue}{Firstly, the theorem provides a framework for the operator to jointly derive the \emph{optimal} control strategy against a worst-case adversary, and to characterize the optimal attack strategy that such an adversary would deploy.} Unlike prior studies that addressed only attack design \cite{wu2018optimal,degue2022stealthy}, this approach extends the formulation to a co-design setting. 
Secondly, many works in the literature assume a linear attack policy and design the optimal gain. However, our work is the first to prove that the linear attack policy is the optimal policy among all policies. Next, we extend the result in Theorem~\ref{thm_sup_boundG_fin} to infinite horizon. 

\begin{cor}[Infinite Horizon]\label{cor_sup_boundG_inf}
\textcolor{blue}{Suppose Assumption~\ref{ass_prob_form} holds and the algebraic equation }
\begin{equation}\label{alg_eq_min_max_bound}
\textcolor{blue}{ p=s+A^{\top}p-E^{\top}|r+B^{\top}p|+G^{\top}|-\alpha + F^{\top}p|}
\end{equation}
\textcolor{blue}{admits a nonnegative solution $p\in \mathbb R^n_+$. Then by~\cite[Theorem~1]{albaEmmaAnders}, $p$ is the limit of the recursion~\eqref{seq_sup_fin_bound}, the optimal value of~\eqref{prob_form} is $p^{\top}x_0$, and the optimal attack policy is $a^*[t]=L^*x[t]$, with $L^* \in \mathrm{diag}(\mathrm{sign}(-\alpha+F^{\top}p))G$.} $\hfill \square$
\end{cor}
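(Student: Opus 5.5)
The plan is to deduce the infinite-horizon statement from Theorem~\ref{thm_sup_boundG_fin} by a monotone-convergence argument on the finite-horizon value vectors, and then pass to the limit in the optimality conditions. I will write $p_t^{(T)}$ for the solution of \eqref{seq_sup_fin_bound} with terminal condition $p_T^{(T)}=0$. By time invariance, $p_0^{(T)}$ equals the vector $q_T$ obtained by iterating the map
\[
\Phi(q)=s+A^{\top}q-E^{\top}|r+B^{\top}q|+G^{\top}|-\alpha+F^{\top}q|
\]
$T$ times from $q_0=0$. By Theorem~\ref{thm_sup_boundG_fin}, the value of the $T$-horizon problem is then $q_T^{\top}x_0$.

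\textbf{Step 1: $\Phi$ is monotone on $\mathbb{R}^n_+$.} For $q\le q'$ I will show $\Phi(q)\le\Phi(q')$. Write $\Phi(q)$ as a min over $|u|\le Ex$ and a max over $|a|\le Gx$ of the one-step cost $s^{\top}x+r^{\top}u-\alpha^{\top}a+q^{\top}(Ax+Bu+Fa)$, evaluated on nonnegative $x$. Under Assumption~\ref{ass_prob_form}, every such $x\ge0$ gives $Ax+Bu+Fa\ge (A-|B|E-|F|G)x\ge0$. The one-step cost is therefore nondecreasing in $q$ for each fixed $(u,a)$, and min/max preserve this ordering. The second condition in Assumption~\ref{ass_prob_form} gives $\Phi(0)\ge s-E^{\top}|r|+G^{\top}|\alpha|-\dots\ge0$; more precisely, $\Phi(0)=s-E^{\top}|r|+G^{\top}|\alpha|\ge0=q_0$. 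Induction then shows $q_T$ is nondecreasing in $T$.

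\textbf{Step 2: boundedness by the assumed fixed point.} If $p\ge0$ solves \eqref{alg_eq_min_max_bound}, then $q_0=0\le p$, and monotonicity gives $q_T=\Phi^T(0)\le\Phi^T(p)=p$. Hence $q_T$ converges to some $\bar p$ with $\bar p\le p$. Since $\Phi$ is continuous, $\bar p$ is a fixed point. This is the content of \cite[Theorem~1]{albaEmmaAnders}, which I would cite, checking only that its hypotheses (positivity and the sign conditions) are exactly Assumption~\ref{ass_prob_form}. The main obstacle is identifying the limit with the given $p$ when the fixed point is not unique, since the argument only yields the minimal nonnegative solution $\bar p$. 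I would either argue uniqueness via the cited theorem, or interpret $p$ in the statement as that minimal solution.

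\textbf{Step 3: optimal value and attack policy.} The infinite-horizon value is the limit of the finite-horizon values, because all stage costs are nonnegative on trajectories in the positive orthant. By monotone convergence this limit is $\bar p^{\top}x_0$. For the stationary attack $a=L^*x$ with $L^*\in\mathrm{diag}(\mathrm{sign}(-\alpha+F^{\top}p))G$, the choice attains the maximum in $\Phi(p)$ pointwise, since $\max_{|a_i|\le (Gx)_i}(F^{\top}p-\alpha)_ia_i=|F^{\top}p-\alpha|_i(Gx)_i$. Paired with the stationary $K^*$, it makes the Bellman equation hold with equality. Telescoping $p^{\top}x[t]-p^{\top}x[t+1]$ shows that against this attack, the controller cannot achieve a cost below $p^{\top}x_0$ minus a tail term $p^{\top}x[T+1]\ge0$. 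Conversely, the matching controller achieves at most $p^{\top}x_0$. Together these establish optimality.
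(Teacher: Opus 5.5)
The paper gives no argument for this corollary beyond invoking \cite[Theorem~1]{albaEmmaAnders}. Your Steps~1--2 correctly reconstruct the mechanism behind that citation: monotonicity of $\Phi$ from $A\ge|B|E+|F|G$, then $\Phi(0)=s-E^{\top}|r|+G^{\top}|\alpha|\ge 0$, hence $0\le q_T\uparrow\bar p\le p$ with $\bar p$ the minimal nonnegative fixed point.

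\textbf{The uniqueness obstacle is genuine.} It cannot be removed by a uniqueness argument, because under Assumption~\ref{ass_prob_form} the nonnegative solution of \eqref{alg_eq_min_max_bound} need not be unique, even with $s>0$. Take $n=l=1$, $E=0$, $F=1$, $G=\tfrac12$, $A=\tfrac{9}{10}$, $s=\tfrac12$, $\alpha=10$. Then $\Phi(q)=\tfrac12+\tfrac{9}{10}q+\tfrac12|q-10|$ has two fixed points, $\tfrac{55}{6}$ and $\tfrac{45}{4}$. The recursion from $0$ and the true value both select $\tfrac{55}{6}$, with optimal attack $a=-\tfrac12 x$. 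By contrast, $p=\tfrac{45}{4}$ prescribes $a=+\tfrac12 x$. That attack gives $x[t+1]=\tfrac{7}{5}x[t]$ with stage cost $-\tfrac{9}{2}x[t]$, so the total cost goes to $-\infty$ for the maximizer. The corollary is therefore true only when $p$ is read as the minimal nonnegative solution, i.e. the limit of \eqref{seq_sup_fin_bound}. That is exactly what your argument produces, so of your two options only the second one is viable.

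\textbf{Step~3, stage costs.} Write $\ell_t=s^{\top}x[t]+r^{\top}u[t]-\alpha^{\top}a[t]$. Assumption~\ref{ass_prob_form} does \emph{not} make $\ell_t$ nonnegative. It gives $s-E^{\top}|r|+G^{\top}|\alpha|\ge 0$, but an admissible stage cost can be as small as $(s-E^{\top}|r|-G^{\top}|\alpha|)^{\top}x$ (for example $-\tfrac{9}{2}x$ above). The value identification can be repaired without that claim:
\begin{itemize}
\item The controller $K^*$ built from $\bar p$ guarantees $\sum_{t\le T}\ell_t\le\bar p^{\top}x_0-\bar p^{\top}x[T+1]\le\bar p^{\top}x_0$ for every $T$.
\item The attacker guarantees at least $q_T^{\top}x_0$ by playing the $T$-stage policy of Theorem~\ref{thm_sup_boundG_fin}. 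Afterwards it plays any $a$ with $-\alpha^{\top}a=|\alpha|^{\top}Gx$, which keeps every later stage cost $\ge(s-E^{\top}|r|+G^{\top}|\alpha|)^{\top}x\ge 0$.
\end{itemize}
Together these give value $\bar p^{\top}x_0$.

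\textbf{Step~3, stationary attack.} The optimality of $a=L^*x$ does not follow from your telescoping. With $a=L^*x$ you only get $\sum_{t\le T}\ell_t\ge p^{\top}x_0-p^{\top}x[T+1]$. The tail enters with the wrong sign, so you would need $p^{\top}x[T+1]\to 0$ against every controller response, and that is not automatic. It fails even at the minimal solution. Take $A=\tfrac12$, $F=1$, $G=\tfrac12$, $E=0$, $s=1$, $\alpha=2$. The recursion gives $0,2,2,\dots$, so $\bar p=2$ and $-\alpha+F^{\top}\bar p=0$. Then $L^*=\tfrac12$ is an allowed selection from the sign set. It yields $x[t]\equiv x_0$ and zero stage cost, so the total is $0<2x_0$. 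The attack claim therefore needs an extra ingredient that your argument does not supply: a suitable selection from the sign set together with a decay condition ensuring $\bar p^{\top}x[t]\to 0$ along the resulting closed loop.
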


Corollary \ref{cor_sup_boundG_inf} states that the optimal value of \eqref{prob_form} can be found by solving an algebraic equation in the infinite horizon. \textcolor{blue}{Most works in the literature solve \eqref{prob_form} via an SDP \cite{anand2025quantifying} where the number of paramaters to solve grow quadratically in the state dimension. However, our work is the first to propose an algebraic equation instead, making the problem scalable. This scalability is due to the fact that the number of parameters in \eqref{alg_eq_min_max_bound} grow linearly in the state dimension.}
\begin{rem}
The right hand sides of~\eqref{L_star_fin} \textcolor{blue}{and~\eqref{K_t}} are sets, since multiple attacks \textcolor{blue}{and control actions} may exist that achieve the optimal cost. In particular, for any index $i$ such that $-\alpha_i+[F^{\top}p_t]_i=0$, it holds that $L^{*} \in \mathcal{L}^{*}$, where
%all feedback gain matrices are in the set
    \begin{align*}
        \textcolor{blue}{\scalebox{1}{$\mathcal{L}^*=\left\{DG \hspace{1mm} | \hspace{1mm} D_{ii} \in [-1,1], \; D_{jj}\in \text{sign}(-\alpha_j + [p_t^{\top}F]_j) \hspace{1mm}\mathrm{ for } \hspace{1mm} j \neq i\right\}$} }
    \end{align*}
\textcolor{blue}{Analogously, for any index $i$ such that $r_i+p(t)_i^{\top}B_i=0$ it holds that $K^*\in \mathcal K^*$ where}
\begin{align*}
   \textcolor{blue}{ \scalebox{1}{$\mathcal{K}^*=\left\{DE \hspace{1mm} | \hspace{1mm} D_{ii}\in \left[ -1, 1\right], D_{jj} \in \operatorname{sign}(r_j+[p_t^{\top} B]_j) \hspace{1mm}\mathrm{ for } \hspace{1mm} j\neq i \right\}$}.} \triangleleft
\end{align*} 
\end{rem}
\subsection{Zero-dynamics attack}
In this section, we discuss the geometric growth of the optimal cost in the presence of unstable zeros. Before we present the results, we introduce some definitions.
\begin{defn}[Unstable invariant zero]
Given a tuple $\Sigma = (A,B,C)$, $\lambda \in \mathbb{C}$ is an invariant zero of $\Sigma$ if $\exists\;x_0 \in \mathbb{R}^n, g \in \mathbb{R}^m$ such that 
\begin{equation}\label{eq:zero}
    \begin{bmatrix}
        \lambda I-A & B\\
        C & 0
    \end{bmatrix}\begin{bmatrix}
        z_0\\ g
    \end{bmatrix} = \begin{bmatrix}
        0 \\ 0
    \end{bmatrix}.
\end{equation}
The zero is defined to be unstable if $ \vert \lambda \vert > 1$, and $(\lambda,z_0,g)$ is defined as the zero tuple. $\hfill \triangleleft$ 
\end{defn}

For a system with an unstable invariant zero, there exists a non-zero input which makes the states grow in magnitude whilst the outputs are close to zero. Such inputs are called zero-dynamics. 
Equivalently, using \cite[Theorem~1]{teixeira2012revealing}, the zero dynamics can also be characterized using geometric control.
\begin{defn}[Zero dynamics using geometric control]\label{def:geometric}
The input $a[t]=Lz[t]$ with $z[t+1] = (A+BL)z[t]$, $(A+FL)\mathcal V \subseteq \mathcal V \subseteq \text{ker} \hspace{1mm}C$ and $z_0 = z[0] \in \text{ker} \hspace{1mm}C $ yields $y[t]=0, \forall t \geq 0$ with $x[0]=z_0$.
$\hfill \triangleleft$ 
\end{defn}

Next, we show that under specific conditions, the adversary can exploit the zero dynamics to construct stealthy attacks while driving the cost unbounded. 
\begin{thm}\label{thm:zeros}
    Consider the optimal control problem~\eqref{prob_form}. Let $(\lambda,x_0,g)$ be an unstable zero tuple associated with $\Sigma=(A,F,E)$ which satisfies $x_0 \geq 0$ and $\vert g \vert \leq Gx_0$. Suppose there exists a matrix $L$ such that $Lx_0=g$, where $x_0\in \text{ker}\hspace{1mm} C$, $(A+FL)\mathcal V \subseteq \mathcal{V} \subseteq \text{ker}\hspace{1mm} C$ and $|Lx|\leq Gx, \; \forall x \in \mathcal V$. Consider the feedback attack policy $a[t]=Lx[t]$ with $x[0]=x_0$. Then the optimal cost of \eqref{prob_form} is given by
\begin{equation}\label{eq:optimal:zero:cost}
(s-L^{\top}\alpha)^{\top}x_0\textstyle\sum\nolimits_{i=0}^{T-1}\lambda^i
\end{equation}  
if $(s-L^{\top}\alpha)^{\top}>0$ and $(s-L^{\top}\alpha)^{\top}x_0>0$. $\hfill \square$
\end{thm}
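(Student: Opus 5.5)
The plan is to show that under the stated feedback attack $a[t]=Lx[t]$ the closed-loop trajectory stays on the ray spanned by $x_0$. There it grows geometrically, the controller is forced to $u\equiv 0$, and the cost can be summed in closed form.

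\textbf{Step 1 (invariance and geometric growth).} By the zero-tuple relation \eqref{eq:zero} for $\Sigma=(A,F,E)$ we have $Ex_0=0$ and, with the sign convention of \eqref{eq:zero} absorbed into $L$ (i.e.\ choosing $L$ with $(A+FL)x_0=\lambda x_0$, which is what $Lx_0=g$ encodes), $(A+FL)x_0=\lambda x_0$. Since $x_0\ge 0$ and the trajectory must remain in the positive orthant, $\lambda$ is necessarily real and $\lambda>1$. Provided $u[t]=0$ (see Step 2), induction gives $x[t]=\lambda^t x_0$. Moreover $x[t]\in\mathcal V\subseteq\ker C$, because $x_0\in\mathcal V$ and $(A+FL)\mathcal V\subseteq\mathcal V$. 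Hence $y[t]=0$ for all $t$, so the attack is stealthy in the sense of Definition~\ref{def:geometric}.

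\textbf{Step 2 (admissibility and the controller's lack of choice).} Since $x[t]\in\mathcal V$, the hypothesis $|Lx|\le Gx$ on $\mathcal V$ gives $|a[t]|\le Gx[t]$, so $a\in\mathcal A_{Gx}$ at every step. For the controller, $Ex[t]=\lambda^tEx_0=0$, so the constraint $|u[t]|\le Ex[t]$ forces $u[t]=0$. The minimization over $u$ is therefore trivial along this trajectory, which closes the induction of Step 1. This is the mechanism by which the zero of $(A,F,E)$, rather than of $(A,F,C)$, matters: the defender's authority collapses exactly on the attacked direction.

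\textbf{Step 3 (evaluation of the cost).} Substituting $u[t]=0$, $a[t]=Lx[t]$ and $x[t]=\lambda^tx_0$ into the objective of \eqref{prob_form}, each stage contributes
\[
s^\top x[t]-\alpha^\top Lx[t]=(s-L^\top\alpha)^\top x_0\,\lambda^t .
\]
Summing over the horizon, as in \eqref{eq:optimal:zero:cost}, gives $(s-L^\top\alpha)^\top x_0\sum_{i=0}^{T-1}\lambda^i$. This is positive under the sign hypotheses of the theorem and grows like $\lambda^T$ because $\lambda>1$, so the cost is unbounded as $T\to\infty$.

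\textbf{Main obstacle.} The delicate point is the word ``optimal''. Steps 1--3 show that this value is attained by an admissible attack against every admissible controller, since $u=0$ is forced. It is therefore a lower bound on the min--max value of \eqref{prob_form}, and this already suffices for the unboundedness conclusion. To show it is exactly the optimum, I would compare with the recursion \eqref{seq_sup_fin_bound} of Theorem~\ref{thm_sup_boundG_fin}. Concretely, I would show by backward induction that $p_t^\top x_0=(s-L^\top\alpha)^\top x_0\sum_{i=0}^{T-1-t}\lambda^i$ along the ray, using:
\begin{itemize}
\item $Ex_0=0$, which kills the $-E^\top|r+B^\top p_{t+1}|$ term;
\item the inequality $|Lx_0|\le Gx_0$ together with $(s-L^\top\alpha)>0$, which shows that the $G^\top|F^\top p_{t+1}-\alpha|$ term is matched by the linear choice $L$.
\end{itemize}
I would try this induction first. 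If exact equality fails in general, I would state the result as the value achieved by the zero-dynamics policy, which is a lower bound on the optimal cost.
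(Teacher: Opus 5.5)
Your Steps 1--3 are essentially the paper's proof. The paper chains its open-loop zero-dynamics lemma ($a[t]=\lambda^t g$ gives $x[t]=\lambda^t x_0$), the feedback realization $a[t]=Lx[t]$ of the same trajectory, and the fact that $Ex[t]=0$ forces $u[t]=0$, and then sums the stage costs. Like you, the paper never compares this number with the min--max value $p_0^\top x_0$ of Theorem~\ref{thm_sup_boundG_fin}. So what both arguments actually establish is a lower bound on the optimal value: \eqref{eq:optimal:zero:cost} is the cost of an admissible attack against which every admissible controller is forced to $u\equiv 0$. That is all the unboundedness conclusion needs. Deriving $\lambda>1$ from invariance of the positive orthant is a small improvement, since the paper's lemma simply assumes $\lambda>0$.

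The part that does not work is your planned upgrade to equality. From \eqref{seq_sup_fin_bound}, with $Ex_0=0$ and $Ax_0=\lambda x_0-Fg$,
\[
p_t^\top x_0=s^\top x_0+\lambda\,p_{t+1}^\top x_0+|F^\top p_{t+1}-\alpha|^\top Gx_0-p_{t+1}^\top Fg .
\]
This depends on the whole vector $p_{t+1}$, not only on $p_{t+1}^\top x_0$, so the induction does not close on the ray. Moreover, $|g|\le Gx_0$ gives only $|F^\top p_{t+1}-\alpha|^\top Gx_0\ge (F^\top p_{t+1}-\alpha)^\top g$, hence $p_t^\top x_0\ge (s-L^\top\alpha)^\top x_0+\lambda\,p_{t+1}^\top x_0$, which is the same lower bound again. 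Equality would require $g\in\mathrm{diag}(\operatorname{sign}(F^\top p_{t+1}-\alpha))Gx_0$ at every step, and positivity of $s-L^\top\alpha$ plays no role in that.

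Equality genuinely fails under the displayed hypotheses. Take $n=m=l=1$, $C=0$ (so $E=0$), $A=B=F=G=1$, $r=\alpha=0$, $s=1$, $x_0=1$, $L=g=\tfrac12$ (so $(A+FL)x_0=\tfrac32 x_0$) and $\mathcal V=\{0\}$. The recursion gives $p_t=1+2p_{t+1}$, so $p_0=2^T-1>2\bigl((3/2)^T-1\bigr)$ for $T\ge 2$, because the worst-case attacker saturates $a=Gx$.

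Your induction closes exactly only under the assumption $x_0\in\mathcal V$, which you use implicitly in Step 1. Since $\mathcal V$ is a subspace, $|Lx|\le Gx$ at $\pm x_0$ forces $Gx_0=0$, hence $g=0$ and $Ax_0=\lambda x_0$. The $G$-term then vanishes, but the ``attack'' is identically zero. Steps 1--2 do not need $\mathcal V$ at all, since $Cx[t]=\lambda^tCx_0=0$ and $|a[t]|=\lambda^t|g|\le Gx[t]$ hold directly on the ray. So the correct statement is your fallback: \eqref{eq:optimal:zero:cost} is the value achieved by the zero-dynamics policy and a lower bound on the optimal cost.
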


In short, Theorem~\ref{thm:zeros} states that if there exists an unstable zero, and the value of $\alpha$ is small enough, the optimal attack policy is the \emph{stealthy} zero dynamics attack. Additionally, since $\lambda>1$, the optimal cost in \eqref{eq:optimal:zero:cost} diverges as $T\to\infty$. The feedback zero-dynamics attack $a[t]=Lx[t]$ is therefore admissible (satisfies the input constraints), stealthy, and produces unbounded performance degradation. Next we illustrate the results through a numerical example. 
\begin{rem}
The zero-dynamics attack requires that $x_0\in \mathcal V$ to obtain identically zero output. When $x_0 \notin \mathcal V$, a transient appears at the output due to the initial condition mismatch. When $\lambda>1$, the unstable zero dynamics dominates the transient dynamics, and eventually governs the system behavior \cite{teixeira2012revealing}. $\hfill \triangleleft$.
\end{rem}
\begin{exmp}
To illustrate the previous results, we consider a DT LTI system of the form \eqref{sys} with
\begin{align*}
&        \scalebox{1}{$A=\begin{bmatrix}
            0.95&0.08&0.045\\
            0.2&0.67&0.05\\
            0&0.02&0.4
        \end{bmatrix}, \hspace{1mm}
        B=\begin{bmatrix}
            0.12&0\\
            0&0.1\\
            0.01&-0.01
        \end{bmatrix},\hspace{1mm} C= e_3^\top$}\\
& \scalebox{1}{$B_a = \begin{bmatrix}
            0&1
        \end{bmatrix}^{\top},\; F =\begin{bmatrix}
            0&0.1&-0.01
        \end{bmatrix}^{\top}, \; E_y = \begin{bmatrix}
        0.05 & 
        0.05
    \end{bmatrix}^\top $}
\end{align*}
and $G=  \begin{bmatrix}0.1&2&0\end{bmatrix}$.
By computing the transfer function from the attack input a to the output y, we verify that the system admits a real unstable transmission zero at $\lambda \approx 1.04$. 

Let us set $x_0=\begin{bmatrix} 0.3602 & 0.4172 & 0 \end{bmatrix}^\top $, $g=0.8344$.
This choice ensures that $x_0$ lies along the system's internal zero-dynamics mode corresponding to the unstable zero $\lambda= 1.04$ while also satisfying $Ex_0=0$ and $|g|\leq Gx_0$. 

Next, choose the state and control weights in the cost function of the problem~\eqref{prob_form} as $s=2 \cdot \mathds 1$, $r=\mathds 1$ and $\alpha=1$. Using the backwards recursion we compute the nominal cost-to-go sequence $p_t$ and the corresponding sequence for the system under attack. Fig.~\ref{exmp1_states} represents the nominal state trajectories, the states in open-loop under zero dynamics attack $a[t]=\lambda^t g$ and in closed-loop with nominal feedback $u[t]=-Kx[t]$ and feedback attack $a[t]=Gx[t]$. Fig.~\ref{exmp_1_cost} depicts the cost-to-go $\mathds 1^{\top}p_t$ for the nominal system and the system under attack. Here it can be observed that the nominal cost (blue) grows approximately linearly with the horizon, whereas the cost under zero-dynamics attack (orange) increases geometrically at a rate determined by the unstable zero $\lambda$. $\hfill \triangleleft$ 
\end{exmp}
\begin{figure*}[t]
    \centering
    \includegraphics[width=1\textwidth]{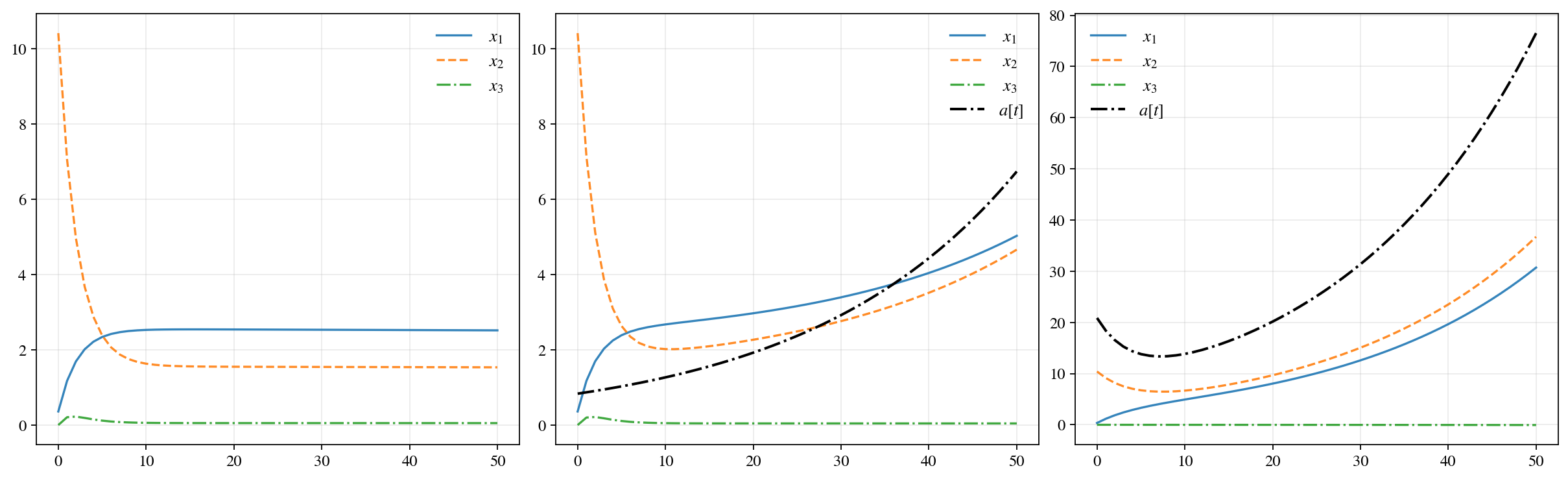}
    \caption{Nominal state trajectories (left), open-loop system under attack (middle), and closed-loop system under attack (right). The zero-dynamics attack drives the system state along the unstable zero direction, resulting in unbounded growth in the optimal cost while remaining stealthy.}
    \label{exmp1_states}
\end{figure*}
\begin{figure}[t]
        \centering
        \includegraphics[width=6cm]{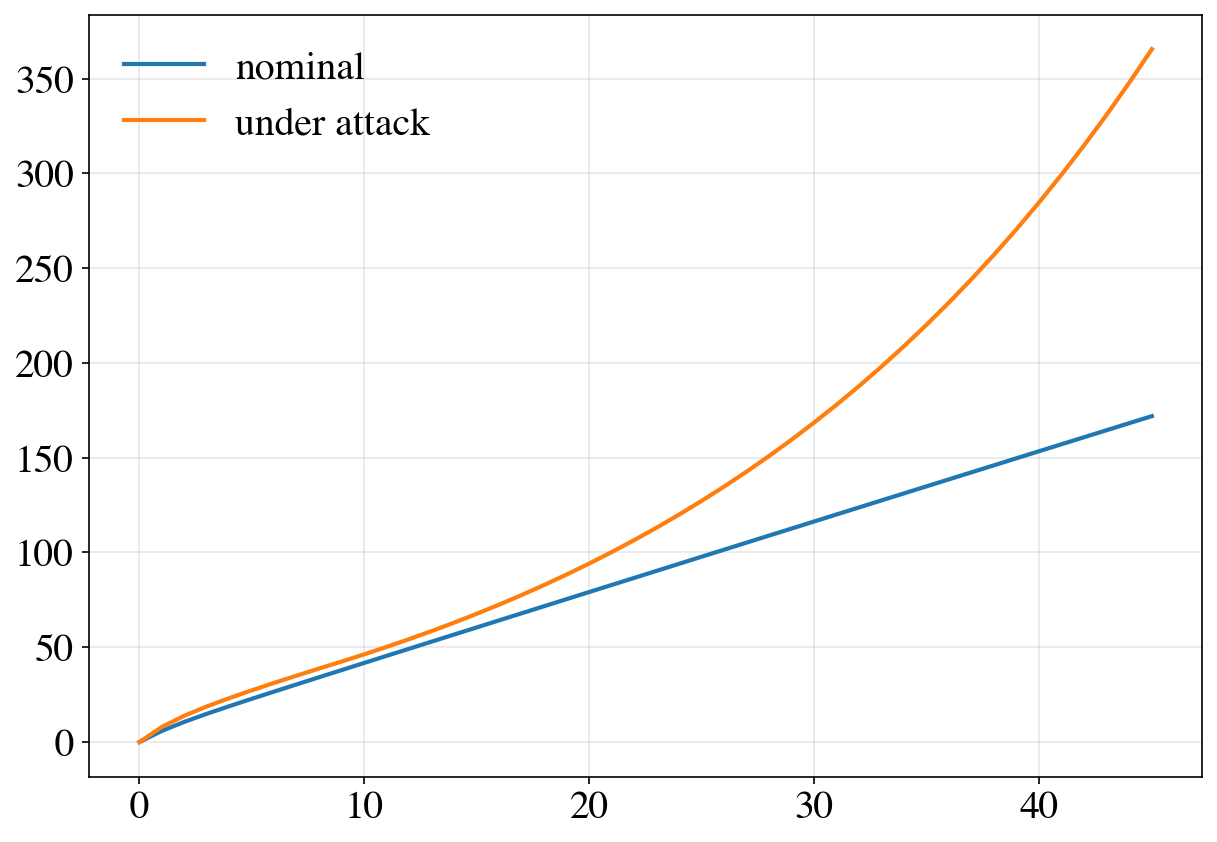}
        \caption{Nominal and Attacked $\mathds 1^{\top} p_{\textcolor{blue}{t}}$ for $\Sigma$.}
        \vspace{-15pt}
        \label{exmp_1_cost}
    \end{figure}
\section{Additional Discussions}\label{sec:unbounded}
In this section, we extend the main results in two directions. Firstly, we consider an adversary that does not adhere to the constraint in \eqref{eq:attack_constraint}. Secondly, we extend the results to uncertain systems where we consider an operator with access to only the nominal model.
\subsection{Unconstrained attack signals}
In this section, we consider attack signals lie in the set
\begin{align*}
    \mathcal A_{\geq 0}&:=\left\{ a\in \mathbb R^l  : a \geq 0\right\}.
\end{align*}
Then we aim to solve \eqref{prob_form} where the constraint $a \in \mathcal A_{Gx}$ is replaced by $a\in \mathcal A_{\geq 0}$. Let us call this optimization problem $(P2)$. The following result provides necessary and sufficient conditions under which $(P2)$ has a finite solution.
\begin{thm}\label{thm_sup_unb_fin}
Consider the system \eqref{sys} and suppose that \textcolor{blue}{Assumption~\ref{ass_prob_form} holds.} Then 
the optimal control problem $(P2)$ has the optimal value $p_0^{\top}x_0$ with 
    \begin{equation}\label{seq_sup_fin_unbound}
        p_{t}=s+A^{\top}p_{t+1}-E^{\top}|r+B^{\top}p_{t+1}|,\;p_T=0
    \end{equation} 
    {if and only if $\alpha \geq F^{\top}p_0$.} $\hfill \square$
\end{thm}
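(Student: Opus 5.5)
The plan is a dynamic programming argument with linear value functions $V_t(x)=p_t^{\top}x$, run backwards from $V_T=0$. The approach parallels the proof of Theorem~\ref{thm_sup_boundG_fin}; the only change is the attack set, which is now the cone $\mathcal A_{\geq 0}$ rather than the box $|a|\leq Gx$.

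\emph{Stage problem.} I would fix $t$ and assume $V_{t+1}(x)=p_{t+1}^{\top}x$ with $p_{t+1}\geq 0$. The stage problem is
\[
\min_{|u|\leq Ex}\max_{a\geq 0}\; s^{\top}x+r^{\top}u-\alpha^{\top}a+p_{t+1}^{\top}(Ax+Bu+Fa).
\]
It separates into a $u$-part and an $a$-part. Minimizing over the box, exactly as in Theorem~\ref{thm_sup_boundG_fin}, gives $(r+B^{\top}p_{t+1})^{\top}u \mapsto -|r+B^{\top}p_{t+1}|^{\top}Ex$. The $a$-part is $\sup_{a\geq 0}(F^{\top}p_{t+1}-\alpha)^{\top}a$. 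This supremum equals $0$ (attained at $a=0$) when $\alpha\geq F^{\top}p_{t+1}$, and $+\infty$ otherwise. So whenever the condition holds at every stage, the recursion \eqref{seq_sup_fin_unbound} follows and the value is $p_0^{\top}x_0$.

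\emph{Positivity and monotonicity.} Positivity of $p_t$ follows from Assumption~\ref{ass_prob_form}. Using $A\geq |B|E$ and $s\geq E^{\top}|r|$ (the $G$-terms only help) together with $|r+B^{\top}p|\leq |r|+|B|^{\top}p$, one gets $p_t\geq 0$ by induction. I would also show that $p_t$ is nonincreasing in $t$, i.e.\ $p_0\geq p_1\geq\dots\geq p_T=0$. The map $\Phi(p)=s+A^{\top}p-E^{\top}|r+B^{\top}p|$ is monotone on $p\geq 0$, since $A^{\top}-E^{\top}\mathrm{diag}(\mathrm{sign})B^{\top}\geq A^{\top}-E^{\top}|B|^{\top}\geq 0$. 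With $p_{T-1}=\Phi(0)\geq 0=p_T$, induction gives $p_t\geq p_{t+1}$. Since $F$ is not assumed nonnegative, I would argue on $F^{\top}p_{t+1}$ through the attack value rather than rely on entrywise monotonicity of $F^{\top}p_t$. The cleanest statement is that $p_t$ is the value of the unattacked problem over horizon $T-t$, which grows with the horizon.

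\emph{Necessity and sufficiency.} For sufficiency, I need $\alpha\geq F^{\top}p_0$ to imply $\alpha\geq F^{\top}p_{t}$ for all $t$. For necessity, if $\alpha_i<[F^{\top}p_{t+1}]_i$ at some stage, the adversary injects $a[t]=\kappa e_i$ with $\kappa\to\infty$ and plays $a=0$ afterwards. The cost then grows linearly in $\kappa$ for any admissible $u$, because the controller's response is bounded by $Ex$ and later costs remain nonnegative by Assumption~\ref{ass_prob_form}. Hence the value is unbounded.

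\emph{Main obstacle.} The hard step is linking the stage-wise conditions $\alpha\geq F^{\top}p_{t+1}$, $t=0,\dots,T-1$, to the single condition $\alpha\geq F^{\top}p_0$. I would first try to show that $F^{\top}p_t$ is monotone in $t$, using the sign structure of $F=BB_a$ together with $p_t\geq p_{t+1}$. If that fails, I would read the statement with the stage index interpreted as the worst case $\max_t F^{\top}p_{t}$. That maximum is attained at the longest remaining horizon, i.e.\ $p_0$ or $p_1$, by the monotonicity of $p_t$ whenever $F\geq 0$ on the relevant components.
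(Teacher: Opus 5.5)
Your dynamic-programming route is the one the paper intends; it omits this proof as ``similar to'' that of Theorem~\ref{thm_sup_boundG_fin}. Your stage computation is also right: with $V_{t+1}(x)=p_{t+1}^{\top}x$, the attack at stage $t$ is priced by $F^{\top}p_{t+1}-\alpha$, so the value is $p_0^{\top}x_0$ exactly when $\alpha\ge F^{\top}p_t$ for $t=1,\dots,T$.

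\textbf{The main gap.} The step you flag as the main obstacle is left open, and it cannot be closed as stated. $p_0$ never enters any stage problem, so the theorem's condition is off by one index. Even with $F\ge 0$ and $p_0\ge p_1\ge\cdots\ge p_T=0$, the exact condition is $\alpha\ge F^{\top}p_1$, and $\alpha\ge F^{\top}p_0$ is merely sufficient.

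Here is a counterexample to the ``only if''. Take scalar data $A=B=B_a=F=1$, $E=G=0.1$, $s=1$, $r=0$, $T=2$, which satisfy Assumption~\ref{ass_prob_form}. The recursion gives $p_2=0$, $p_1=1$, $p_0=1.9$. For $\alpha=1.5$, the last stage has value $x_1$, and the first stage gives $\min_{|u_0|\le 0.1x_0}\max_{a_0\ge 0}(2x_0+u_0-0.5a_0)=1.9x_0=p_0x_0$. This is finite, although $\alpha<F^{\top}p_0$.

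If one instead reads the cost literally as $\sum_{t=0}^{T}$, then $\alpha\ge F^{\top}p_0$ becomes the right condition, but the value becomes $\Phi(p_0)^{\top}x_0$ rather than $p_0^{\top}x_0$. Your fallback of reinterpreting the index is therefore the correct diagnosis. However, it corrects the theorem rather than proving it, and the binding index is $p_1$, not ``$p_0$ or $p_1$''.

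\textbf{Hypotheses you use that Assumption~\ref{ass_prob_form} does not supply.}

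First, sufficiency needs $F^{\top}p_0\ge F^{\top}p_t$, which follows from $p_0\ge p_t$ only if $F\ge 0$. With mixed-sign $F$ it can fail. Moreover, an unbounded $a\ge 0$ can then push $x$ out of $\mathbb R^n_+$, where $V_{t+1}(x)=p_{t+1}^{\top}x$ was never established. Your necessity argument needs $Fe_i\ge 0$ for the same reason. Given that, with $a=0$ at every other stage, the controller's best cost against your attack is exactly $p_0^{\top}x_0+\kappa([F^{\top}p_{t+1}]_i-\alpha_i)$. So you do not need (and do not have) nonnegativity of later costs.

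Second, your claim $p_t\ge 0$ and the chain $p_0\ge p_1\ge\cdots$ rest on $s\ge E^{\top}|r|$. Assumption~\ref{ass_prob_form} only gives $s\ge E^{\top}|r|-G^{\top}|\alpha|$, so the $G$-term makes the hypothesis weaker than what you use, not stronger. In $(P2)$ there is no compensating $+G^{\top}|F^{\top}p-\alpha|$ term, unlike in the proof of Lemma~\ref{thm_uncertain}. Hence $p_{T-1}=s-E^{\top}|r|$ may have negative entries.

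\textbf{What your argument does prove.} It proves a corrected statement: if additionally $F\ge 0$ and $s\ge E^{\top}|r|$, then $(P2)$ has value $p_0^{\top}x_0$ if and only if $\alpha\ge F^{\top}p_1$. In that setting, $\alpha\ge F^{\top}p_0$ is a sufficient condition only.
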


The proof is similar to the proof of Theorem~\ref{thm_sup_boundG_fin} and is thus omitted. Observe that Theorem~\ref{thm_sup_unb_fin} establishes that $(P2)$ admits a finite value if and only if $\alpha \geq F^{\top}p_0.$ This condition ensures that the penalty on the attack effort outweighs the adversarial gain, keeping the attack bounded. If for some time $\bar t$ we have $\alpha < F^{\top}p_{\bar t}$, this balance is lost and both the attack signal and the cost grow unbounded as $t \rightarrow \bar t$. The term $\alpha - F^{\top}p_{\bar{t}}$ defines the \textit{boundedness margin}, a nonnegative margin keeps the attacker inactive $(a[t]=0)$, while a negative margin leads to unbounded adversarial action. Next we illustrate the results presented in this section through a numerical example. 

\begin{exmp}
 Consider a simple water distribution process consisting of two storage tanks supplied by three pumps. The state vector is the water level in the tanks, and the dynamics are of the form \eqref{sys} where 
    \begin{align*}
        A = \begin{bmatrix}
            0.92&0.03\\
            0.15&0.06
        \end{bmatrix}, \hspace{1mm} B=\begin{bmatrix}
            1&0&0.4\\
            0&1&0.7
        \end{bmatrix}, \hspace{2mm} C=\begin{bmatrix}
            1&0
        \end{bmatrix}.
    \end{align*}
Suppose that an adversary corrupts pumps $1$ and $3$, i.e.:
    \begin{align*}
        B_a=\begin{bmatrix}
            1&0&0\\
            0&0&1
        \end{bmatrix}^{\top}, \hspace{1mm} F=\begin{bmatrix}
            1&0.4\\
            0&0.7
        \end{bmatrix}.
    \end{align*}
 The adversary gains utility from making both tank levels large, so we set $s=2 \cdot \mathds{1}$, $r=0$  and compute the backwards recursion~\eqref{seq_sup_fin_unbound}. Then we synthesize the corresponding controller gain $K =\text{diag}(\mathrm{sign}(r+B^{\top}p))E_yC$ where $E_y=\begin{bmatrix}
    0.02&0.02&0.02
\end{bmatrix}^{\top}$.

Denote the boundedness margin for attack channel $i$ at time $t$ as $m_t=\min_{i=1,2}[\alpha_i-(F^{\top}p_t)_i]$. By Theorem~\ref{thm_sup_unb_fin}, when the value of $m_t$ becomes positive at a given time instant ${t}^*$, the adversary can choose an attack that drives the cost unbounded at time ${t}^*$ or later. To illustrate the impact of the weight $\alpha$ on the boundedness margin, we compare three different choices of $\alpha$. The top panel in Fig.~\ref{examp_alpha} shows for $x_0=\begin{bmatrix}
    1&0.5
\end{bmatrix}^{\top}$ how the same optimal controller can be counteracted by the attack driving the system unstable at different times $t^*(\alpha)$. $\hfill \triangleleft$
The table below lists the three stealthiness weights $\alpha$ considered in the experiment and their corresponding first-violation times $t^*$. We can infer from this example that a larger value of $\alpha$ helps in mitigating powerful adversaries which does not consider stealthiness constraints. 
\begin{figure}[h]
        \centering
        \includegraphics[width=7cm]{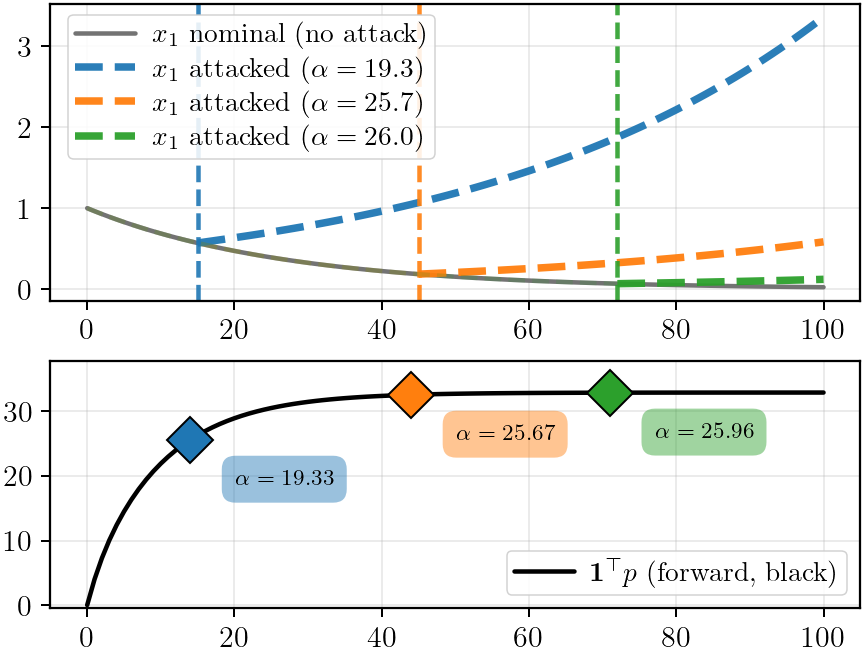}
        \caption{Top: Closed-loop state $x_1[t]$ under persistent attack with starting times $t^*(\alpha)$. Bottom: forward iteration of $\mathds 1^{\top}p_t$ starting at $\mathds 1^{\top}p_0=0$ where colored diamonds represent the last safe index before the boundedness margin becomes  negative for each $\alpha$.}
        \vspace{-15pt}
        \label{examp_alpha}
    \end{figure}
\end{exmp}
\subsection{Uncertain systems}
Consider a dynamical system $\Sigma = (A,B,C)$ with states denoted by $x_{\Sigma}$, and whose corresponding attack-optimal controller $K_{\Sigma}$ is obtained by solving~\eqref{seq_sup_fin_bound}.  During deployment, the actual process dynamics may deviate from the nominal model $\Sigma$. This deviation can result from modeling errors or slowly time-varying dynamics. We denote the actual process dynamics as $\Sigma_r = (A_r,B,C)$, with states denoted by $x_{\Sigma_r}$. We now pose the following question: Suppose the controller $K_{\Sigma}$ designed for the nominal system $\Sigma$ is deployed to control the actual system $\Sigma_r$. Is the controller $K_{\Sigma}$ admissible for $\Sigma_r$? In this section we provide initial insight into this question under additional assumptions. 

\begin{lem}\label{thm_uncertain}
Suppose that $A$ and $A_r$ both satisfy \textcolor{blue}{Assumption~\ref{ass_prob_form} holds}, $r\geq 0$, $F \geq 0$ and $B\geq 0$. Consider the optimal controller for the nominal system $K_{\Sigma}$ in \eqref{K_t}. Then $K_{\Sigma}$ is also admissible for $\Sigma_r$. $\hfill \square$
\end{lem}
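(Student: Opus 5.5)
Our plan is to identify what "admissible" means for $K_{\Sigma}$ and check it on trajectories of $\Sigma_r$. Admissibility requires two things. First, the control $u[t]=-K_{\Sigma}[t]x_{\Sigma_r}[t]$ must satisfy $|u[t]|\le E x_{\Sigma_r}[t]$ for all $t$. Second, the closed loop of $\Sigma_r$ under this controller and any admissible attack $a\in\mathcal A_{Gx}$ must keep the state in the positive orthant, so that the cost in \eqref{prob_form} stays well defined and nonnegative. The key observation is that, under $r\ge 0$, $B\ge 0$, $F\ge 0$, the gain $K_{\Sigma}$ does not depend on $A$ at all. This removes any concern that the nominal gain is tuned to the wrong model.

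The first step is a short induction on the recursion \eqref{seq_sup_fin_bound} for the nominal $A$, showing $p_t\ge 0$ for all $t$. We start from $p_T=0$. Assuming $p_{t+1}\ge 0$, we bound
\[
p_t\ge s+A^\top p_{t+1}-E^\top|r|-E^\top|B^\top p_{t+1}|-G^\top|\alpha|+G^\top|F^\top p_{t+1}|\cdot 0 .
\]
Here we drop the nonnegative $G$-term after first using $|F^\top p-\alpha|\ge |\alpha|-|F^\top p|$, or more simply we keep it as nonnegative. This gives
\[
p_t\ge (s-E^\top|r|-G^\top|\alpha|)+(A-|B|E-|F|G)^\top p_{t+1}\ge 0
\]
by Assumption~\ref{ass_prob_form}; the small bookkeeping with the $\alpha$ term needs care. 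With $p_{t+1}\ge 0$, $B\ge 0$ and $r\ge 0$, we get $r+B^\top p_{t+1}\ge 0$. Hence the selection in \eqref{K_t} can be taken as $\mathrm{diag}(\mathrm{sign}(\cdot))=I$ wherever an entry is positive, and any value in $[-1,1]$ where it is zero. In particular, $K_{\Sigma}[t]=DE$ with $D$ diagonal and $|D|\le I$, for every $t$.

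The second step is to verify the constraint on $\Sigma_r$ directly. For any state $x\ge 0$ we have $|u|=|DEx|\le |D|Ex\le Ex$, because $E\ge 0$ and $x\ge 0$. This uses only positivity of the true state, not the model.

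The third step shows by induction that $x_{\Sigma_r}[t]\ge 0$. We write $x[t+1]=A_rx-BDEx+Fa$ with $|a|\le Gx$, and bound entrywise
\[
x[t+1]\ge (A_r-|B|E-|F|G)x\ge 0,
\]
using Assumption~\ref{ass_prob_form} for $A_r$. This closes the induction from $x_0\ge 0$, so the controller is admissible for $\Sigma_r$.

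The main obstacle is the nonnegativity of $p_t$ in the first step, specifically the handling of the $-\alpha$ term in $|F^\top p_{t+1}-\alpha|$ together with the second inequality of Assumption~\ref{ass_prob_form}. If the direct bound is awkward, we would instead invoke the interpretation of $p_t^\top x_0$ as the optimal value of \eqref{prob_form}, which is nonnegative by the remark following Assumption~\ref{ass_prob_form}, for every $x_0\ge 0$; this yields $p_t\ge 0$. Even without that, $F\ge 0$ is only needed for the attacker's sign and is not required for the controller's admissibility.
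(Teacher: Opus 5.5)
You have the right architecture, and it is the paper's: induct on \eqref{seq_sup_fin_bound} to get $p_t\ge 0$, then deduce $r+B^\top p\ge 0$ so that $K_\Sigma=DE$ regardless of $A$. Your steps two and three go further than the paper. They check $|u|\le Ex$ and the nonnegativity of $x_{\Sigma_r}$ under every admissible attack, using Assumption~\ref{ass_prob_form} for $A_r$, which the paper leaves implicit.

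The one real flaw is in step one. Your bound $p_t\ge(s-E^\top|r|-G^\top|\alpha|)+(A-|B|E-|F|G)^\top p_{t+1}$ is valid, but its first term is not nonnegative under Assumption~\ref{ass_prob_form}. That assumption only gives $s-E^\top|r|+G^\top|\alpha|\ge 0$. Your alternative of dropping the $G$-term altogether would need $s\ge E^\top|r|$, which is not assumed either. The fix is to use the reverse triangle inequality in the direction you mention, $|F^\top p_{t+1}-\alpha|\ge|\alpha|-|F|^\top p_{t+1}$, together with $|r+B^\top p_{t+1}|\le|r|+|B|^\top p_{t+1}$. This gives
\[
p_t\ \ge\ \bigl(s-E^\top|r|+G^\top|\alpha|\bigr)+\bigl(A-|B|E-|F|G\bigr)^\top p_{t+1}\ \ge\ 0,
\]
where the two terms are nonnegative exactly by the two conditions of Assumption~\ref{ass_prob_form}. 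No sign hypotheses on $r$, $B$, $F$ are needed for this step. The paper's written proof contains the same slip: it uses $|F^\top p_{t+1}-\alpha|\ge F^\top p_{t+1}-|\alpha|$ and then claims $s-E^\top r-G^\top|\alpha|\ge 0$, which does not follow from $s\ge E^\top|r|-G^\top|\alpha|$.

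For admissibility in your sense, step one is not actually load-bearing. Every selection from $\mathrm{diag}(\mathrm{sign}(\cdot))E$ has the form $DE$ with $|D|\le I$, because the signum takes values in $[-1,1]$. So your steps two and three hold for any sign pattern. The facts $p_t\ge 0$, $r\ge 0$, $B\ge 0$ are needed only for the stronger conclusion the paper draws: that the nominal law collapses to $u=-Ex$ independently of the state matrix, so the same argument run with $A_r$ returns the same law.

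Your fallback through the remark after Assumption~\ref{ass_prob_form} is right in substance but rests on an imprecise claim. The stage cost is not nonnegative for every admissible pair $(u,a)$. What holds is that the min--max value is nonnegative. The attacker can always pick an admissible $a$ with $-\alpha^\top a=|\alpha|^\top Gx$, which makes every stage cost at least $(s-E^\top|r|+G^\top|\alpha|)^\top x\ge 0$ along nonnegative trajectories. That is the same computation as the corrected induction.
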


Lemma~\ref{thm_uncertain} states that when $A$ and $A_r$ ensures invarinace of the positive orthant, with positive costs ($r \geq 0$), then under some very mild assumptions $F \geq 0$ and $B\geq 0$, the nominal controller is admissible in the presence of uncertainty. We next illustrate the above results through an example.
\begin{exmp}
Let $s=\begin{bmatrix}
            0.6 &
            0.8 &
            0.2
        \end{bmatrix}^\top$, $r=\mathds 1_2$, 
    \begin{align*}
        &\scalebox{1}{$A= \begin{bmatrix}
            0.33&0.33&0.22\\
            0.22&0.11&0.11\\
            0.55&0.66&0.55
        \end{bmatrix}, \hspace{1mm} A_r=\begin{bmatrix}
            0.42&0.28&0.14\\
            0.28&0.14&0.14\\
            0.84&0.98&0.84
        \end{bmatrix}$}\\
        &\scalebox{1}{ $B=\begin{bmatrix}
            0.3&0.1\\
            0&0\\
            0.4&0.5
        \end{bmatrix}, \hspace{1mm} C= \begin{bmatrix}
            0.24&0.36\\
            0.28&0.32\\
            0.2&0
        \end{bmatrix}^{\top}, \hspace{1mm} B_a=\begin{bmatrix}
            0.5&0.3\\
            0.2&0.2
        \end{bmatrix}$}\\ 
        &\scalebox{1}{$E_y=\begin{bmatrix}
            0.6&0\\
            0.48&0.12
        \end{bmatrix},
        \hspace{1mm} G=\begin{bmatrix}
            0&0.1&0.4\\
            0.3&0.3&0.2
        \end{bmatrix}, \;\alpha= 3 \mathds 1_2.$}
    \end{align*}
First we observe that the optimal controller for both configurations is static $u[t]=E x[t]$. Thus the controller is admissible for $\Sigma$ and $\Sigma_r$. However, as expected, the optimal attack policy changes signs differently over time for $A$ and $A_r$. This can be verified from the backward recursion for $p_t$, as the attack sign at time $t$ depends on $F^{\top}p_t-\alpha$. By inspection over a time horizon $T=50$, for the nominal matrix $A$, $F^{\top}p_t-\alpha<0$ for all $t=0,\dots T$, whereas for $A_r$ the sign pattern switches, with channel 2 switching at $t=42$ in and channel 1 at $t=46$.
The maximum model mismatch of $|A-A_r|\approx 0.32$ results in an increase of the open-loop spectral radius $|\rho(A)-\rho(A_r)|\approx 0.26$, which results in different closed-loop system performance. Analyzing this performance difference is left for future work. $\hfill \triangleleft$
\end{exmp}
\section{Conclusions}\label{sec:con}
We addressed resilient controller design against stealthy cyber-attacks using a minimax formulation. We proved that the optimal attack and control policies are linear, and can be obtained by solving a difference equation, and an algebraic equation in the infinite-horizon case. Conditions under which the performance cost becomes unbounded in terms of unstable system zeros were presented. Future work includes designing an uncertainty-aware resilient controller.
%\bibliographystyle{unsrt} %plaindin
%\bibliography{bibliography}

\bibliographystyle{abbrv} 
\bibliography{cas-refs}

\appendix

\section*{Proof of Theorem~\ref{thm_sup_boundG_fin}}
As it is shown in~\cite{albaEmmaAnders}, for the minimax optimal control problem class~\eqref{prob_form}  with linear dynamics, linear stage costs, and homogeneous state-dependent constraints admit linear value functions. Consequently, the Bellman recursion yields a cost-to-go of the form $V_t(x)=p_t^{\top}x$ where $p_t$ evolves according to the backwards recursion~\eqref{seq_sup_fin_bound}, which proves claim 1). Using the same dynamic programming argument, the optimal attack policy $a[t]=\mu_a(x[t])$ is obtained by solving
\begin{align*}
    &\scalebox{1}{$\mu_a(x) = \mathrm{argmax}_{ | a|\leq Gx}
    \big [ s^{\top}x+r^{\top}u-\alpha^{\top}a + p^{\top}_t(Ax+Bu+Fa) \big ]$}\\
   &~~~~~~~~\scalebox{1}{$=\mathrm{argmax}_{ | a|\leq Gx} \textstyle\sum\nolimits_{i=1}^{m}\left[ \left ( -\alpha_{i}+p^{\top}_t F_{i} \right )a_{i} \right].$}
\end{align*}
Since for each $i = 1, \hspace{1mm}... \hspace{1mm} l$ the constraint $\left|a \right | \leq Gx$ restricts $a_{i}$ to the interval $\left [ -G_{i}x, G_{i}x \right ]$, the maximization is attained at the boundary of this interval. Hence, the optimal choice satisfies $a_{i} \in \mathrm{sign}(-\alpha_{i}+p^{\top}F_{i})G_{i}x,\hspace{1mm}i = 1 \hspace{1mm}... \hspace{1mm} l$ 
which yields the optimal attack policy in feedback form. $\hfill \blacksquare$

\section*{Proof of Theorem~\ref{thm:zeros}}
Before we present the proof, we present some preliminary results which helps us with the proof. 
First, in Lemma~\ref{lem_zero_dyn_open}, we show that when the system has a real zero, it is possible to construct an admissible attack sequence that produces state trajectories whose growth rate is determined by the zero’s eigenvalue.
\begin{lem}\label{lem_zero_dyn_open} 
Suppose there exists a unstable zero tuple $(\lambda, x_0,g)$ associated with $\Sigma=(A,F,E)$ which satisfies $\lambda >0$, $x_0\geq 0$ and $|g|\leq Gx_0$. Let $x[0]=x_0$ and define open-loop attack policy $a[t]=\lambda^t g, u[t]=0$. Then, the trajectories of \eqref{sys} in open loop becomes $x[t]=\lambda^t x_0, \hspace{2mm} \text{for all} \hspace{2mm} t=0,\dots, T,$ the attack is admissible and the total cost of \eqref{prob_form} under attack is
\begin{equation}\label{eq:cost}
    J(x_0)=(s^{\top}x_0-\alpha^{\top}g)\textstyle\sum\nolimits_{i=0}^{T-1}\lambda^i.
\end{equation}
\end{lem}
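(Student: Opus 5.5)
The plan is to verify the three claims of Lemma~\ref{lem_zero_dyn_open} directly by induction on $t$, using only the defining relation of the zero tuple in \eqref{eq:zero} with $\Sigma=(A,F,E)$. That relation reads $(\lambda I - A)x_0 + F g = 0$ and $E x_0 = 0$. Since the system is driven through $+Fa$, I would fix the sign convention so that $Ax_0 + Fg = \lambda x_0$. Equivalently, I use $(\lambda I-A)x_0 = Fg$ up to replacing $g$ by $-g$. The hypothesis $|g|\le Gx_0$ is symmetric in the sign of $g$, so this choice does not affect admissibility.

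\textbf{Trajectory.} With $u[t]=0$ and $a[t]=\lambda^t g$, I argue by induction. If $x[t]=\lambda^t x_0$, then
$$x[t+1]=A\lambda^t x_0+F\lambda^t g=\lambda^t(Ax_0+Fg)=\lambda^{t+1}x_0,$$
and the base case $x[0]=x_0$ holds trivially. This gives $x[t]=\lambda^t x_0\ge 0$ for $t=0,\dots,T$, because $\lambda>0$ and $x_0\ge 0$.

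\textbf{Admissibility.} For the attack, I use $|a[t]|=\lambda^t|g|\le \lambda^t Gx_0 = Gx[t]$, so $a[t]\in\mathcal A_{Gx}$ at every step; this uses $\lambda>0$. For the input, $u[t]=0$ trivially satisfies $|u[t]|\le Ex[t]$, since $Ex[t]\ge 0$ by $E\ge0$. In fact $Ex[t]=\lambda^t Ex_0=0$, so the input constraint collapses to $u=0$ along this trajectory. This observation is worth recording for Theorem~\ref{thm:zeros}: along the zero direction the controller has no freedom to counteract the attack.

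\textbf{Cost.} I substitute into the objective of \eqref{prob_form}. Each stage term becomes $s^\top\lambda^t x_0 - \alpha^\top\lambda^t g=\lambda^t(s^\top x_0-\alpha^\top g)$. Summing over the horizon gives \eqref{eq:cost}. The main obstacle is bookkeeping the summation range. The paper's sum in \eqref{prob_form} runs over $t=0,\dots,T$, while \eqref{eq:cost} runs over $i=0,\dots,T-1$. I would adopt the convention implicit in $p_T=0$ of Theorem~\ref{thm_sup_boundG_fin}: the terminal stage carries no cost, consistent with the attack being active on $T$ steps. This yields exactly $\sum_{i=0}^{T-1}\lambda^i$. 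The other point needing care is the sign convention in \eqref{eq:zero}, addressed above.
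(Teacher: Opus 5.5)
Your proposal is correct and follows the paper's proof essentially line by line: induction on $Ax_0+Fg=\lambda x_0$ for the trajectory, the bound $|a[t]|=\lambda^t|g|\le \lambda^t Gx_0=Gx[t]$ for admissibility, and summing the stage costs $\lambda^t(s^\top x_0-\alpha^\top g)$ over $t=0,\dots,T-1$. You make explicit two conventions the paper uses silently, the sign in \eqref{eq:zero} and the summation range. One small caveat: flipping $g$ would also flip the sign of $\alpha^\top g$ in \eqref{eq:cost}, so the convention must be fixed so that the same $g$ satisfies $Ax_0+Fg=\lambda x_0$ and appears in both the attack and the cost, which is what you in fact do.
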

\begin{proof}[Proof of Lemma~\ref{lem_zero_dyn_open}]
   Consider the system dynamics \eqref{sys} under attack with $u[t]=0$. 
   % \begin{align*}
   %     x[t+1]=Ax[t]+Fa[t]. 
   % \end{align*}
   By assumption, $(\lambda, x_0, g)$ is an unstable zero tuple of $(A,F,E)$ which implies $Ax_0+Fg=\lambda x_0.$
   % \begin{align*}
   %     Ax_0+Fg=\lambda x_0.
   % \end{align*}
   Define the open-loop attack policy $a[t]=\lambda^t g$ and initialize the system at $x_0 \geq 0$.
   % $x[t]=\lambda^t x_0$. 
   It can be shown by induction that the resulting state trajectory satisfies $x[t]=\lambda^t x_0$ for all $t=0, \dots, T$. By definition this holds for $t=0$. Suppose it holds for some $t\geq 0$. Then $x[t+1]=\lambda^t(Ax_0+Fg)=\lambda^t(\lambda x_0)=\lambda^{t+1}x_0$, which completes the induction. Next, we verify the admissibility of the attack. 
   Since  $x_0\geq 0$, and $\lambda >0$, it follows that $x[t]=\lambda^t x_0 \geq 0$ for all $t$. Moreover, the assumption $|g|\leq Gx_0$ implies that $|a[t]|=\lambda^t |g|\leq \lambda^t Gx_0=Gx[t]$ so the contraint $|a[t]|\leq Gx[t]$ is satisfied at all times. 
   Finally, the stage cost at time $t$ reduces to $s^{\top}x[t]-\alpha^{\top}a[t]=\lambda^t(s^{\top}x_0-\alpha^{\top}g)$. Summing the stage cost over $t=0,\dots, T-1$  yields~\eqref{eq:cost} which completes the proof.
\end{proof}

Next, in Lemma~\ref{lem_zero_dyn_cl}, we show that the open-loop zero-dynamics behavior can be replicated through a feedback law of the form $a[t] = L x[t]$. This feedback realization enables a consistent representation of the zero-dynamics attack within the dynamic-programming (DP) framework of the paper. Recall that the recursions in Theorem~\ref{thm_sup_boundG_fin} arise form applying DP recursions of the form:
\begin{align*}
p_{t}^\top x[t]  &= s^\top x[t]  \\
&~~+ \min_{|u|<Ex[t]} \max_{|a|<Gx[t]}\left( r^\top u - \alpha^\top a + p_{t+1}^\top x[t+1]\right) 
\end{align*}

\begin{lem}\label{lem_zero_dyn_cl}
Let $(\lambda, x_0, g)$ be the unstable zero tuple associate with $(A,F,E)$. Suppose there exists a subspace $\mathcal{V} \subseteq \ker{C}$ and a matrix $L$ such that $(A+FL)\mathcal{V} \subseteq \mathcal{V}$, $x_0 \in \mathcal{V}$ and $Lx_0=g$. Assume that $|Lx|\leq Gx$ for all $x\in \mathcal{V}$. Consider the attack policy $a[t]=Lx[t]$ with $x[0]=x_0$. Then, for trajectories restricted to $\mathcal{V}$, the DP recursion of \eqref{prob_form} satisfies
    \begin{equation}\label{eq:zero:DP}
         p_t=\lambda p_{t+1}+s-L^{\top}\alpha, \hspace{2mm}p_T=0.
    \end{equation}
    In particular,
    \begin{equation}\label{eq:zero:DP2}
        p_t=(s-L^{\top}\alpha)\sum_{k=t}^{T-1}\lambda^{T-k-1}, \hspace{0.1mm} p_0=(s-L^{\top}\alpha)\sum_{i=0}^{T-1}\lambda^i.
    \end{equation}
\end{lem}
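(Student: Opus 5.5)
The plan is to fix the attack policy $a[t]=Lx[t]$, run the dynamic-programming recursion only along the resulting closed-loop trajectory, and then read off the recursion for $p_t$ from the eigen-relation of the zero tuple. First, I will show that the trajectory stays in $\mathcal V$ and is explicitly computable. The zero tuple of $(A,F,E)$ gives $Ax_0+Fg=\lambda x_0$ and $Ex_0=0$. Since $Lx_0=g$, this becomes $(A+FL)x_0=\lambda x_0$. Because $x_0\in\mathcal V$ and $(A+FL)\mathcal V\subseteq\mathcal V$, induction (as in Lemma~\ref{lem_zero_dyn_open}) gives
\begin{equation*}
x[t]=(A+FL)^t x_0=\lambda^t x_0\in\mathcal V .
\end{equation*}
Along this trajectory the following hold:
\begin{itemize}
\item $x[t]\geq 0$, since $\lambda>0$ and $x_0\geq 0$;
\item $Ex[t]=0$, so the control constraint $|u|\leq Ex[t]$ forces $u[t]=0$;
\item $|Lx[t]|\leq Gx[t]$ by hypothesis, so the attack is admissible.
\end{itemize}

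Second, I will substitute into the DP recursion stated before the lemma. With $u=0$ and $a=Lx[t]$, and using the ansatz $V_{t+1}(x)=p_{t+1}^\top x$ (justified by Theorem~\ref{thm_sup_boundG_fin}), the stage relation reads
\begin{equation*}
p_t^\top x[t]=s^\top x[t]-\alpha^\top Lx[t]+p_{t+1}^\top(A+FL)x[t].
\end{equation*}
For $x[t]$ in the invariant direction, $(A+FL)x[t]=\lambda x[t]$. The right-hand side therefore equals $(s-L^\top\alpha+\lambda p_{t+1})^\top x[t]$. Identifying coefficients gives \eqref{eq:zero:DP}, with terminal condition $p_T=0$ inherited from Theorem~\ref{thm_sup_boundG_fin}.

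Third, I will unroll the affine recursion \eqref{eq:zero:DP} backward from $p_T=0$. This gives
\begin{equation*}
p_{T-1}=c,\qquad p_{T-2}=c(1+\lambda),\qquad \ldots,\qquad p_t=c\sum_{k=t}^{T-1}\lambda^{T-k-1},
\end{equation*}
where $c=s-L^\top\alpha$. This is a short induction and yields \eqref{eq:zero:DP2}, including the case $t=0$. As a consistency check, $p_0^\top x_0$ matches the cost \eqref{eq:cost} of Lemma~\ref{lem_zero_dyn_open}, since $L^\top\alpha$ contracted with $x_0$ gives $\alpha^\top g$.

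The main obstacle is the coefficient identification in the second step. The equality $p_{t+1}^\top(A+FL)x=\lambda p_{t+1}^\top x$ holds only along the eigen-direction $x_0$, not on all of $\mathcal V$ unless $(A+FL)|_{\mathcal V}=\lambda I$. I would therefore state the recursion as an identity of value functions evaluated on the trajectory, i.e.\ tested against $x_0$, which is exactly the sense of ``restricted to $\mathcal V$'' needed later for Theorem~\ref{thm:zeros}. Alternatively, I would take $\mathcal V=\mathrm{span}\{x_0\}$, for which invariance is automatic and the identification is exact.
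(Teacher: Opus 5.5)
Your proposal is correct and follows the same route as the paper's proof: the feedback $a=Lx$ reproduces the eigen-trajectory $x[t]=\lambda^t x_0$ of Lemma~\ref{lem_zero_dyn_open}, $Ex[t]=0$ forces $u[t]=0$, the DP step gives \eqref{eq:zero:DP}, and unrolling backward from $p_T=0$ gives \eqref{eq:zero:DP2}; the difference is that you actually carry out the computation the paper only asserts. Your caveat is also correct, with one refinement: the recursion holds only as an identity tested against $x_0$, and even with $\mathcal V=\mathrm{span}\{x_0\}$ it determines $p_t$ only modulo $x_0^{\perp}$, so \eqref{eq:zero:DP} is a consistent choice of representative rather than an exact coefficient identification. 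That is all Theorem~\ref{thm:zeros} needs.
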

\begin{proof}[Proof of Lemma~\ref{lem_zero_dyn_cl}]
    Lemma~\ref{lem_zero_dyn_open} shows that the open-loop policy $a[t]=\lambda^t g$ with $x[0]=x_0$ generates the zero-dynamics trajectory $x[t]=\lambda^t x_0$ and produces a cost that grows geometrically with~$\lambda$.
    Lemma~\ref{lem_zero_dyn_cl} establishes that, under the stated assumptions on $L$ and $\mathcal V$, the feedback realization $a[t]=Lx[t]$ reproduces the same state evolution within~$\mathcal V$, and that the corresponding cost recursion satisfies \eqref{eq:zero:DP}. Solving backward with $p_T=0$ yields \eqref{eq:zero:DP2}.
\end{proof}

Lastly, we next provide the proof of Theorem~\ref{thm:zeros}.
 
\begin{proof}[Proof of Theorem~\ref{thm:zeros}]
Lemma~\ref{lem_zero_dyn_open} shows that the open-loop zero-dynamics input $a[t]=\lambda^t g$ generates the trajectory $x[t]=\lambda^t x_0$ and yields a cost growing as $\sum_{i=0}^{T-1}\lambda^i$.
Lemma~\ref{lem_zero_dyn_cl} shows that, under the stated conditions on $L$ and $\mathcal V$, the feedback law $a[t]=Lx[t]$ reproduces the same trajectory within $\mathcal V$ and that the cost recursion reduces to \eqref{eq:zero:DP}. Solving backwards gives \eqref{eq:zero:DP2}, and at $t=0$ yields \eqref{eq:optimal:zero:cost}. 

By assumption $(s-L^{\top}\alpha)^{\top}x_0 >0$. Hence, when $\lambda>1$ the geometric sum grows unbounded with $T$. Moreover, since $\mathcal V \subseteq \text{ker} \;C$, the resulting trajectories satisfy $Cx[t]=0$ for all $t$, and the attack remains stealthy with respect to the measured output. Since $E=E_yC$, it follows that $Ex[t]=0$ along these trajectories, which enforces $u[t]=0$ through the admissibility constraint. Consequently, the controller cannot mitigate the zero-dynamics attack, and the cost grows unbounded while the attack remains stealthy. 
\end{proof}

\section*{Proof of Lemma~\ref{thm_uncertain}}

Suppose $p_{t+1} \geq 0$. Since $r \geq 0$, and $B \geq 0$, we have $|r + B^\top p_{t+1}| = r + B^\top p_{t+1}$. Since $F \geq 0$, we have $F^\top p_{t+1} \geq 0$, and by the reverse triangle inequality, $|F^\top p_{t+1} - \alpha| \geq F^\top p_{t+1} - |\alpha| = |F^\top|p_{t+1} - |\alpha|$. Substituting into \eqref{seq_sup_fin_bound} gives
\begin{align*}
p_t &\geq s - E^\top r - G^\top|\alpha| + (A^\top - E^\top B^\top + G^\top|F^\top|)p_{t+1}.
\end{align*}
The first term is nonnegative by $s \geq E^\top |r| - G^\top|\alpha|$, and the second is nonnegative since $A \geq |B|E + |F|G$ implies $A^\top - E^\top B^\top + G^\top|F^\top| \geq 2G^\top|F^\top| \geq 0$ and $p_{t+1} \geq 0$. Hence $p_t \geq 0$, completing the induction.

Since $p_t\geq 0$, it follows that $r+B^{\top}p_t\geq 0$, and the optimal control law~\eqref{K_t} reduces to $u[t]=Ex[t]$, independent of the uncertain state matrix. Hence $K_{\Sigma}$ is admissible for both $\Sigma$ and $\Sigma_r$, which concludes the proof. $\hfill \blacksquare$

\end{document}